\newtheorem{definition}{Definition}
\newtheorem{lemma}[definition]{Lemma}
\newtheorem{theorem}[definition]{Theorem}
\newenvironment{remark}{\noindent \textbf{{Remark~}}}{}
\def\squareforqed{\hbox{\rlap{$\sqcap$}$\sqcup$}}
\def\qed{\ifmmode\squareforqed\else{\unskip\nobreak\hfil
\penalty50\hskip1em\null\nobreak\hfil\squareforqed
\parfillskip=0pt\finalhyphendemerits=0\endgraf}\fi}
\def\endenv{\ifmmode\;\else{\unskip\nobreak\hfil
\penalty50\hskip1em\null\nobreak\hfil\;
\parfillskip=0pt\finalhyphendemerits=0\endgraf}\fi}
\newlength{\blank}
\newenvironment{proof}[1][{\hspace{-\blank}}]{{\noindent\textbf{Proof~{#1}.\ }}}{\hfill\qed\vskip 0.5\baselineskip}
\DeclareRobustCommand\idop{%
	\leavevmode\hbox{\small1\normalsize\kern-.33em1}}
\DeclareRobustCommand\scriptidop{\leavevmode\hbox{\fontsize{7}{8}\selectfont 1\scriptsize\kern-.33em1}}
\mathchardef\ordinarycolon\mathcode`\:
\def\vcentcolon{\mathrel{\mathop\ordinarycolon}}
\newcommand{\nc}{\newcommand}
\nc{\rnc}{\renewcommand}
\nc{\beq}{\begin{equation}}
\nc{\eeq}{{\end{equation}}}
\nc{\beqa}{\begin{eqnarray}}
\nc{\eeqa}{\end{eqnarray}}
\nc{\lbar}[1]{\overline{#1}}
\nc{\bra}[1]{\langle#1|}
\nc{\ket}[1]{|#1\rangle}
\nc{\ketbra}[2]{|#1\rangle\!\langle#2|}
\nc{\braket}[2]{\langle#1|#2\rangle}
\nc{\inp}[2]{\langle#1|#2\rangle}
\nc{\proj}[1]{| #1\rangle\!\langle #1 |}
\nc{\avg}[1]{\langle#1\rangle}
\nc{\Rank}{\operatorname{Rank}}
\nc{\smfrac}[2]{\mbox{$\frac{#1}{#2}$}}
\nc{\tr}{\operatorname{Tr}}
\nc{\ox}{\otimes}
\nc{\Ox}{\bigotimes}
\nc{\dg}{\dagger}
\nc{\dn}{\downarrow}
\nc{\cA}{{\cal A}}
\nc{\cB}{{\cal B}}
\nc{\cC}{{\cal C}}
\nc{\cD}{{\cal D}}
\nc{\cE}{{\cal E}}
\nc{\cF}{{\cal F}}
\nc{\cG}{{\cal G}}
\nc{\cH}{{\cal H}}
\nc{\cI}{{\cal I}}
\nc{\cJ}{{\cal J}}
\nc{\cK}{{\cal K}}
\nc{\cL}{{\cal L}}
\nc{\cM}{{\cal M}}
\nc{\cN}{{\cal N}}
\nc{\cO}{{\cal O}}
\nc{\cP}{{\cal P}}
\nc{\cR}{{\cal R}}
\nc{\cS}{{\cal S}}
\nc{\cT}{{\cal T}}
\nc{\cX}{{\cal X}}
\nc{\cZ}{{\cal Z}}
\nc{\csupp}{{\operatorname{csupp}}}
\nc{\qsupp}{{\operatorname{qsupp}}}
\nc{\var}{\operatorname{var}}
\nc{\rar}{\rightarrow}
\nc{\lrar}{\longrightarrow}
\nc{\polylog}{\operatorname{polylog}}
\nc{\sign}{{\operatorname{sign}}}
\def\e{\epsilon}
\nc{\RR}{{{\mathbb R}}}
\nc{\CC}{{{\mathbb C}}}
\nc{\FF}{{{\mathbb F}}}
\nc{\NN}{{{\mathbb N}}}
\nc{\ZZ}{{{\mathbb Z}}}
\nc{\PP}{{{\mathbb P}}}
\nc{\QQ}{{{\mathbb Q}}}
\nc{\UU}{{{\mathbb U}}}
\nc{\EE}{{{\mathbb E}}}
\nc{\id}{{\operatorname{id}}}
\nc{\olambda}{{\overline{\lambda}}}
\nc{\ulambda}{{\underline{\lambda}}}
\nc{\be}{\begin{equation}}
\nc{\ee}{{\end{equation}}}
\nc{\bea}{\begin{eqnarray}}
\nc{\eea}{\end{eqnarray}}
\nc{\Hom}[2]{\mbox{Hom}(\CC^{#1},\CC^{#2})}
\nc{\rU}{\mbox{U}}
\nc{\ob}[1]{#1}
\newcommand{\assign}{\ensuremath{\kern.5ex\raisebox{.1ex}{\mbox{\rm:}}\kern -.3em =}}
\begin{document}

\title{Are random pure states useful for quantum computation?}

\author{Michael J. Bremner} 
\affiliation{Department of Computer Science, University of Bristol, Bristol BS8 1UB, U.K.}
\email{michael.bremner@bris.ac.uk}

\author{Caterina Mora}
\affiliation{Institute for Quantum Computation, University of Waterloo, 200 University Ave. W. N2L 3G1, Canada}

\author{Andreas Winter}
\affiliation{Department of Mathematics, University of Bristol, Bristol BS8 1TW, U.K.}
\affiliation{Centre for Quantum Technologies, National University of Singapore,
 2 Science Drive 3, Singapore 117542}

\date{13 December 2008}

\begin{abstract}
We show the following: a randomly chosen pure state
as a resource for measurement-based quantum computation,
is -- with overwhelming probability -- of no greater help
to a polynomially bounded classical control computer, than
a string of random bits. Thus, unlike the familiar ``cluster
states'', the computing power of a classical control device is not increased from $\mathrm{P}$
to $\mathrm{BQP}$, but only to $\mathrm{BPP}$. 
The same holds if the task is to sample from a distribution rather 
than to perform a bounded-error computation. 
Furthermore, we show that our results can be extended to states 
with significantly less entanglement than random states.
\end{abstract}

\pacs{03.67.Lx, 03.67.Ac, 89.70.Eg}
\maketitle

In measurement based (or ``one-way'') quantum computation, two very different resources are used: one is a multi-qubit state $\ket{\Psi}$, the other is a classical algorithm used to determine how to measure the qubits, in which order and in which local basis~\cite{RaussendorfBriegel}. This clear separation of quantum and classical resources gives rise to the question: Which combinations of quantum states and classical control algorithms yield an advantage over classical computation?

In this paper we show that the efficiency requirements on classical processing of measurement data in measurement based models severely limits the class of quantum states which offer a computational speed-up over classical computers.  In particular, we demonstrate that the set of languages that can be decided by randomly chosen pure states together with polynomial-sized classical control circuits is the same (with high probability) as the set of languages that could be decided by polynomial-sized classical circuits and classical randomness alone (that is $\mathrm{BPP}$). Our intuition is that random pure states simply have too many uncorrelated parameters to allow for a computational speed-up over classical processors.  In support of this intuition we extend our main theorem to cover states which do not share the entanglement properties of typical states.

Much of the literature has focused on identifying particular states, or classes of states, for which universal quantum computation can be performed by utilizing a small set of single-qubit measurements and a simple classical control algorithm. This is generally done by recognizing certain ``nice" properties of a state which allow measurement outcomes to be interpreted as having applied a quantum gate to some predefined input state.

One can take a constructive approach, such as in~\cite{Gross06, Gross08} where the authors use techniques for the classical simulation of quantum systems to find simple rules that describe the effects of certain measurements. These rules apply to a wide-range of entangled states and can be used to show that a large variety of systems can support measurement based quantum computation. From a more physical perspective, other work has considered how ground/thermal states of natural systems can be used for measurement based quantum computation~\cite{Barrett08,Brennen08,Doherty08}.

Alternatively, one can identify general physical requirements that must be satisfied in order for it to be universal for quantum computation~\cite{maarten07,vandenNestDuerMiyakeBriegel}. In these papers the authors demonstrate that if the amount of entanglement in a family of states does not grow sufficiently quickly with the number of qubits, then there is no deterministic LOCC protocol that can prepare a family of cluster states.

The line of thinking in our paper is more in the vein of \cite{Anders08} where the authors examine how classical control computers of varying computational power are boosted by the addition quantum resources. For instance they demonstrate GHZ states enhance classical control devices which are only capable of calculating parities to $\mathrm{BPP}$.

Very recently Gross, Flammia and Eisert~\cite{Obergurgl-talk} have also shown, like in the current paper, that random states (in fact \emph{highly entangled} states) cannot be used for universal measurement-based quantum computation. They demonstrate this by proving that for certain problems in $\mathrm{BQP}$ which are thought to not be in $\mathrm{BPP}$, highly entangled states offer no advantage over classical randomness even given an oracle which supplies the ``best" set of single-qubit measurements to be performed.

\medskip
\noindent
\emph{Abstract measurement-based computation model.---} We begin our analysis by defining the following general model of computation - one which seems to capture all computationally efficient possibilities of using measurements on a quantum state to drive a computation.

\begin{definition}
  \label{def:model}
  A model of abstract measurement-based quantum computation (AMBQC) is
  a sequence of pairs $(\ket{\Psi_n},C_n)$ (with $n \rar \infty$),
  where $\ket{\Psi_n}$ is a $q(n)$-qubit quantum state and
  $C_n$ is a classical Boolean circuit on $w(n)$ bits and having
  at most $v(n)$ logical gates of up to $3$ bits, with distinguished 
  (multi-bit) registers as follows: an $n$-bit input register $x$, 
  a single bit output register $y$, a register $k$ to hold an integer in $[0;q(n)]$,
  a register of $O(q(n))$ bits to hold the measurement outcomes
  $m = m_1 \ldots m_{q(n)}$ of all the qubits measured,
  a sufficiently large output register $\alpha$ to describe the next
  qubit measurement $(L^{(\alpha)}_\mu)_\mu$, and finally workspace
  (ancilla bits) $a$.
  
  In practice we ask for $q(n)$, $v(n)$ and $w(n)$ to be polynomially
  bounded, but it will turn out that it is enough to require that
  they are not ``too big''.
\end{definition}

%

\noindent
A pair $(\ket{\Psi},C) = (\ket{\Psi_n},C_n)$ with certain $q=q(n)$,
$v=v(n)$ and $w=w(n)$ gives rise to a probabilistically branching
history of measurements as follows. Starting with the registers prepared
as $[x,y=0,k=0,m=0,\alpha=0,a=0]$, we run the circuit $C$ to obtain
$1\leq \ell_1 \leq q$ in the $k$-register and $\alpha_1$: this is interpreted 
as an instruction to measure qubit $\ell_1$ with the POVM $L^{(\alpha_1)}$
-- let the (probabilistic) outcome be $m_1$. After measuring $k$ qubits,
having obtained outcomes $m_1,\ldots,m_k$ the circuit $C$ is run on
$[x,y=0,k,m=m_1\ldots m_k 0\ldots 0,\alpha=0,a=0]$ to obtain
in the $k$-register an integer $1\leq \ell_{k+1} \leq q$ and
an $\alpha_{k+1}$. This tells us now to measure qubit $\ell_{k+1}$
with the POVM $L^{(\alpha_{k+1})}$, yielding another measurement
outcome $m_{k+1}$.
This iterates until $k=q$ is encountered; in that case, the reading
of the $y$-register in $C[x,y=0,k=q,m=m_1\ldots m_q,\alpha=0,a=0]$ is
the output of the computation. The probability that $y=1$ over all histories
(i.e.~the probability that the computation accepts) is denoted $C_x(\Psi)$.

\medskip
\begin{remark}
Note that our state $\ket{\Psi}$ has $q$ qubits and exactly $q$ measurements
are made. We shall from now on implicitly restrict to AMBQCs $(\ket{\Psi},C)$
in which all histories end up measuring all $q$ qubits (or equivalently,
there is no actually occurring history where some qubit is measured
twice). An AMBQC obeying this condition we call \emph{complete}; it is 
naturally fulfilled in all known specific models.
\end{remark}
\medskip

We say that $(\ket{\Psi},C) = (\ket{\Psi_n},C_n)$ computes a (partial)
function $f:\{0,1\}^n \rar \{0,1\}$ with bounded error, if
\[
  f(x)=1 \Rightarrow C_x(\Psi) \geq 2/3,\quad
  f(x)=0 \Rightarrow C_x(\Psi) \leq 1/3.
\]

Note that, even though in practice this will be an important restriction,
we impose no uniformity on the $C_n$, nor on the states $\ket{\Psi_n}$.

If we are interested in collective properties of all AMBQCs with all possible
inputs (as we shall be shortly), we may even disregard the $n$-bit input
$x$, as a slightly longer control circuit starting off in the all-zero
input can first prepare the input $x$ and then do the actual computation
described above.

These two points mean that we shall actually only look at particular
finite sized $n$, $q$, $v$ and $w$.

\medskip
\noindent
\emph{Random states are not universal.---}
As promised, the question we want to address is whether a generic (i.e., randomly chosen)
state $\ket{\Psi}$ is of any good use to an AMBQC? The way we think of
this is a little different from the usual MBQC, where the resource state
can typically be prepared easily in a quantum computer -- since random
states have enormous time complexity to prepare in a quantum computer~\cite{Mora:QKC}
we think of $\ket{\Psi}$ as being handed to us by an all-powerful, 
Merlinesque character. Since we are similarly not even able to
study a description of the state (as it is too long to read in time
polynomial in $n$~\cite{Mora:QKC}), we cannot be expected to come
up with the control circuit $C$ on our own. Instead it is described
to us by a helpful Merlin, too, giving us the
control circuit $C$ that best exploits the properties of $\ket{\Psi}$.

In simple terms, our main result states that for a typical random
states there is \emph{no} short control circuit that can do anything with
$\ket{\Psi}$ which cannot be simulated to sufficient precision using
classical random bits. 

\begin{theorem}
\label{thm:main}
  For a random state $\ket{\Psi}$ on $n$ qubits, consider classical Boolean
  control circuits $C$ of width $w$ and having at most $v$ gates,
  let $C(\Psi)$ be the probability of acceptance of the AMBQC $(\ket{\Psi},C)$,
  and similarly $C(2^{-q}\1)$ the probability of acceptance when instead
  of $\Psi$ the maximally mixed state $2^{-q}\1$ is used. Then, for
  any $\epsilon>0$,
  \begin{equation}
    \label{eq:main}
    {\Pr}_{\Psi} \!\left\{ \exists C\ \bigl| C(\Psi)-C(2^{-q}\1) \bigr| > \epsilon \right\}\!
                    \leq \!\bigl( 8^8 w\bigr)^{3v} \, e^{-c\epsilon^2 2^q}\!\!,
  \end{equation}
  where $c = \frac{1}{9\pi^3}$ is a universal constant. (Observe that
  the existential quantifier implicitly restricts to complete AMBQCs.)
  \end{theorem}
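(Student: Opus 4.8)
The plan is to reduce, for each \emph{fixed} control circuit $C$, the entire adaptive measurement procedure to a single two-outcome measurement on $\ket\Psi$, and then combine concentration of measure with a union bound over circuits. Concretely, I will show that $C(\Psi)=\tr(A_C\proj\Psi)$ for an operator $A_C$ with $0\le A_C\le\1$ depending only on $C$ (and on the fixed family of single-qubit POVMs $(L^{(\alpha)})$); since $\EE_\Psi\proj\Psi=2^{-q}\1$, this already identifies $C(2^{-q}\1)=2^{-q}\tr A_C$ as the mean of $C(\Psi)$ over random $\Psi$.

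To get the reduction, unwind the branching history of $(\ket\Psi,C)$. For fixed $C$ the instructions form a decision tree: at step $j$ the circuit reads the outcomes so far and outputs a qubit index $\ell_j$ and POVM label $\alpha_j$, and the tree branches on the outcome $m_j$. Completeness forces every root-to-leaf path to have length $q$ and to measure the $q$ qubits once each, so the POVM elements applied along a path act on pairwise distinct qubits and hence commute. A short induction on $j$ then shows that the (unconditional) probability of following the path $m=m_1\ldots m_q$ equals $\tr(\Lambda_m\proj\Psi)$, where $\Lambda_m=\bigotimes_{j=1}^{q}\bigl(L^{(\alpha_j)}_{m_j}\bigr)_{\ell_j}$ is the product operator tensoring, over all qubits, the POVM element observed there. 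Summing $\Lambda_m$ over the leaves that carry $y=1$ gives $C(\Psi)=\tr(A_C\proj\Psi)$ with $A_C:=\sum_{m:\,y=1}\Lambda_m$; summing instead over \emph{all} leaves and collapsing the tree from the leaves upward — using $\sum_\mu L^{(\alpha)}_\mu=\1$ at each node — gives $\sum_m\Lambda_m=\1$. Hence $\{A_C,\1-A_C\}$ is a genuine POVM, so $0\le A_C\le\1$.

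It remains to bound, for a fixed effect $A$ with $0\le A\le\1$ on $\CC^{2^q}$, the deviation of $\ket\Psi\mapsto\tr(A\proj\Psi)$ from its mean $2^{-q}\tr A$. This function on the unit sphere of $\CC^{2^q}\cong\RR^{2^{q+1}}$ has Lipschitz constant at most $2\|A\|_\infty\le2$, so Levy's lemma (spherical concentration of measure) yields $\Pr_\Psi\{|\tr(A\proj\Psi)-2^{-q}\tr A|>\epsilon\}\le e^{-c\epsilon^2 2^q}$ with $c=\frac{1}{9\pi^3}$, the constant being the standard one for Lipschitz constant $2$ in real dimension $2^{q+1}$. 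Applying this with $A=A_C$ controls a single circuit. Finally, an admissible $C$ has at most $v$ gates, each selecting at most $3$ of the $w$ wires together with a Boolean function on them, so the total number of circuits is at most $(8^8 w^3)^v\le(8^8 w)^{3v}$ — note that the labels $\alpha$ and the output bit $y$ are computed by the circuit and need no separate count. A union bound over these circuits, combined with the per-circuit estimate, gives \eqref{eq:main}. The one step requiring genuine care is the upward collapse $\sum_m\Lambda_m=\1$: because the qubit measured at step $j$ depends on earlier outcomes, one must verify level by level that, with the earlier branch frozen, the step-$j$ POVM still resolves the identity on its (then uniquely determined) target qubit — which is precisely what completeness supplies. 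The remaining calculations (the induction for $\tr(\Lambda_m\proj\Psi)$, the Levy constant, the circuit count) are routine.
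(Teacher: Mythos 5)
Your reduction of the adaptive protocol to a single effect $A_C$ with $C(\Psi)=\bra{\Psi}A_C\ket{\Psi}$, followed by Levy's lemma and a union bound over at most $(8^8 w)^{3v}$ circuits, is exactly the route the paper takes: your $A_C$ is the paper's accepting operator $P=\sum_{\text{acc.\ hist.}}\bigotimes_k \bigl(L^{(\alpha_k)}_{m_k}\bigr)^{\ell_k}$, and your tree-collapse argument for $\sum_m\Lambda_m=\1$ is the same completeness observation the paper makes implicitly. The circuit count and the identification $C(2^{-q}\1)=2^{-q}\tr A_C$ also match.

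The one step that does not deliver the inequality as stated is your Lipschitz bookkeeping. You bound the Lipschitz constant of $\ket{\Psi}\mapsto\bra{\Psi}A\ket{\Psi}$ by $2\|A\|_\infty\le 2$ and then assert that $c=\frac{1}{9\pi^3}$ is ``the standard constant for Lipschitz constant $2$''; it is not. In Levy's lemma the Lipschitz constant enters as $\Lambda^2$ dividing the exponent, $\Pr\{|f-\EE f|>\epsilon\}\le 4e^{-c\epsilon^2 d/\Lambda^2}$ with $d=2\cdot 2^q$, so with $\Lambda=2$ you obtain only $4e^{-c\epsilon^2 2^{q-1}}$ per circuit, i.e.\ the theorem with $c/2$ in place of $c$ --- the qualitative conclusion survives, but not eq.~(\ref{eq:main}) with the stated constant. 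The paper instead uses that an effect $0\le A\le\1$ gives Lipschitz constant $1$: $\bigl|\bra{\psi}A\ket{\psi}-\bra{\phi}A\ket{\phi}\bigr| = \bigl|\tr A(\proj{\psi}-\proj{\phi})\bigr| \le \tfrac12\bigl\|\proj{\psi}-\proj{\phi}\bigr\|_1 = \sqrt{1-|\braket{\psi}{\phi}|^2} \le \bigl\|\ket{\psi}-\ket{\phi}\bigr\|_2$; with $\Lambda=1$ and $d=2\cdot2^q$ the exponent $-2c\epsilon^2 2^q\le -c\epsilon^2 2^q$ even has slack. Relatedly, you dropped Levy's prefactor $4$ from your per-circuit bound; the paper absorbs it because the circuit count is really $\le\frac16(8^8w)^{3v}$ (via $\binom{w}{3}\le w^3/6$), and your looser count $(8^8w^3)^v$ leaves ample room for it as well. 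With the $\Lambda=1$ estimate substituted, your argument coincides with the paper's proof.
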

  
So, whenever $v\ln w = o(2^q)$ -- e.g.~for all polynomially bounded
circuits -- the right hand side of eq. (\ref{thm:main}) goes to zero exponentially, and hence
for most states the measurement results coming from $\Psi$ can
be replaced by classical independent randomness: this changing the
acceptance probability by at most $\epsilon$, regardless of the
circuit used.

\medskip
\begin{proof}
For a given two-outcome POVM with operators $P\geq 0$ and
$Q=\1-P \geq 0$ acting on $(\mathbbm{C}^2)^{\otimes q}$, a straightforward
application of Levy's Lemma~\cite{MilmanSchechtman} yields
\begin{equation}
  \label{eq:levy-lalala}
  {\Pr}_{\Psi} \left\{ \big| \bra{\Psi} P \ket{\Psi} - 2^{-q} \tr P \bigr| > \epsilon \right\}
                \leq 4e^{-c\epsilon^2 2^q},
\end{equation}
where $c=\frac{1}{9\pi^3}$. The reason is that for any $0\leq P\leq \1$, the
function $\ket{\Psi} \mapsto \bra{\Psi} P \ket{\Psi}$ has Lipschitz constant $1$.
[For a real-valued function $f$ with Lipschitz 
constant $\Lambda$, Levy's Lemma bounds the 
probability that $|f(x)-\EE f|>\epsilon$, for a uniformly
random point $x$ on the $(d-1)$-dimensional Euclidean 
sphere, by $4\e^{-c\epsilon^2d/\Lambda^2}$~\cite{MilmanSchechtman}. We apply this to $f(\ket{\Psi})=\bra{\Psi} P \ket{\Psi}$, and $d=2.2^q$.]

Now, observe that every control circuit effectively describes such a two-outcome
POVM: the circuit starts making measurements on the system, and for each
sequence of previous outcomes decides on the next measurement; at the end,
the complete data obtained -- the sequence $\ell = \ell_1\ldots\ell_q$ of qubits
measured, the local measurements $\alpha=\alpha_1\ldots\alpha_q$ and the
outcomes $m=m_1\ldots m_q$ -- is used to decide acceptance or rejection.
Thus, we find the accepting and rejecting operators,
\[
  P = \sum_{(\ell,\alpha,m) \atop \text{~acc.~history~}} 
            \bigotimes_{k=1}^q \left( L^{(\alpha_k)}_{m_k} \right)^{\ell_k},
\]
and $Q=\1-P$. In this way, clearly, $C(\Psi) = \bra{\Psi} P \ket{\Psi}$.

The number of possible circuits to consider is at most
$\left( 8^8 {w \choose 3} \right)^{v} \leq \frac{1}{6} \bigl( 8^8 w\bigr)^{3v}$,
so we put together eq.~(\ref{eq:levy-lalala}) with the union bound to obtain
eq.~(\ref{eq:main}), observing the simple equation $C(2^{-q}\1) = 2^{-q} \tr P$.

Finally, we have to explain why the latter probability can be
sampled efficiently classically. But that is straightforward, since
the maximally mixed state $2^{-q}\1$ is a tensor product of single-qubit
maximally mixed states $\frac{1}{2}\1$, so indeed each measurement
result of a local POVM $(L_\mu)_\mu$ may be sampled independently
with probability $\frac{1}{2}\tr L_\mu$ for outcome $\mu$,
which can be done efficiently thanks to the classical description of the POVM.
\end{proof}

\begin{remark}
  For traditional MBQC, the local measurements are simply von Neumann
  measurements, i.e.~consisting of two orthogonal basis projectors.
  In that case the measurement outcomes are simply replaced by
  independent random bits.

  It is clear that the above can be generalized without any difficulty
  to qu$d$its as elementary systems. Equivalently, models that consider 
  measurements on bounded-sized sets of qubits could also be considered.
  
  Note furthermore that the step-by-step simulation above produces a probability
  distribution over the same computational histories as the original AMBQC.
  We do not claim, however, that these two distributions are close (which
  isn't true in general), but only that the efficient coarse-grainings
  represented by the output bit $y$ are.
\end{remark}

\medskip
\noindent
\emph{Sampling of a t-bit string.---}
If the object of the computation is to produce a sample from
distribution on, say, $t$-bit strings, we denote by $C(\Psi)$ the
resulting distribution.
If $t\ll q$ we can apply Levy's Lemma to all $2^t$ probability values of
$C(\Psi)$ -- and in a generalization of Theorem~\ref{thm:main} we
would like to show that for most states $\Psi$ it is indistinguishable
from $C(2^{-q}\1)$, i.e.~the distribution obtained from running
the AMBQC on the maximally mixed state. Since
\[
  \bigl\| C(\Psi)-C(2^{-q}\1) \bigr\|_1 = \sum_{y=0}^{2^t-1} \bigl| C(\Psi)[y]-C(2^{-q}\1)[y] \bigr|,
\]
for the left hand side to exceed $\epsilon$ requires the right hand side to
have one term exceeding $\epsilon/2^t$. Via the same counting argument
as in Theorem~\ref{thm:main}, enhanced by an additional union over
all sample strings $y$, we obtain
\begin{equation}\begin{split}
  \label{eq:main-sampling}
  {\Pr}_{\Psi} &\left\{ \exists C\ \bigl\| C(\Psi)-C(2^{-q}\1) \bigr\|_1 > \epsilon \right\} \\
               &\phantom{===========}     
                                 \leq 2^t \bigl( 8^8 w\bigr)^{3v} \, e^{-c\epsilon^2 2^{q-2t}},
\end{split}\end{equation}
where $c = \frac{1}{9\pi^3}$ is as before. 

In other words, as long as $2^{2t} t v\ln w = o(2^q)$, it is exponentially
unlikely for a random state to provide any advantage for AMBQC over
a maximally mixed state.
We note that this condition is typically fulfilled in ``traditional''
cluster state models, where both $t$ and the depth of the
quantum circuit are polynomial in the input size $n$, so $q$ is a higher order 
polynomial in $n$ than $t$.

\medskip
\noindent
\emph{Schmidt-rank-$K$ states.---}
It is natural to wonder which exact property makes a random state so
particularly useless for AMBQC. Two answers might come to mind: first,
random states have, with high probability, almost maximal description
complexity~\cite{Mora:QKC}. Another is that typical random
states are highly entangled: indeed, 
Gross \emph{et al.}~\cite{Obergurgl-talk} show that the
geometric measure of entanglement on $q=q(n)$ qubits,
\[
  E_g(\ket{\Psi}) = -\log \max_{\ket{\varphi} = \bigotimes_j \ket{\varphi^{(j)}}} | \bra{\varphi} \Psi \rangle |^2,
\]
is with high probability $\geq q - 2\log q - O(1)$. Then they show (similar
to our approach above) that in performing a computation with only one-sided
and bounded error, the measurement outcomes of such states may be 
replaced by independent random bits. The resulting probabilistic
computation still has bounded, one-sided error.
\\
This motivates the following definition and theorem.

\begin{definition}[Random Schmidt-rank $K$ states]
  \label{defi:schmidt}
  Construct the following random state $\Psi$ on $q=q(n)$ qubits, 
  called \emph{random Schmidt-rank $K$ state}.
  We define its distribution by a sequence of random experiments: let
  \begin{equation}
    \label{eq:R}
    R \assign \sum_{j=1}^{K} \proj{\psi_j^{(1)}} \ox \cdots \ox \proj{\psi_j^{(q)}},
  \end{equation}
  where all the $qK$ unit vectors $\ket{\psi_j^{(\ell)}}$ are chosen independently 
  at random from any measure on the pure states of 
  $\CC^2$ such that $\EE \psi_j^{(\ell)} = \frac{1}{2}\1$.
  Now pick a unit vector $\ket{\Psi_0}$ from the support of $R$ according
  to the unitary invariant measure, and finally let
  \begin{equation}
    \label{eq:schmidt}
    \ket{\Psi} = \frac{1}{\sqrt{\bra{\Psi_0} R \ket{\Psi_0}}} \sqrt{R}\ket{\Psi_0}.
  \end{equation}
\end{definition}

\begin{theorem}
  \label{thm:rank-K}
  For a random Schmidt-rank $K$ state $\ket{\Psi}$ on $q$ qubits, 
  with $64 \leq K \leq 2^q$ (which implies $q \geq 6$),
  consider classical Boolean
  control circuits $C$ of width $w$ and having at most $v$ gates.
  Then, for any $\epsilon>0$,
  \begin{equation}\begin{split}
    \label{eq:rank-K}
    {\Pr}_{\Psi} &\left\{ \exists C\ \bigl| C(\Psi)-C(2^{-q}\1) \bigr| > \epsilon \right\}  \\
                 &\phantom{========}
                  \leq \left( 2^q + \bigl( 8^8 w\bigr)^{3v} \right) e^{-c'\epsilon^2 K^{1/3}},
  \end{split}\end{equation}
  and where $c' = \frac{1}{1296\pi^3}$ is a universal constant.
\end{theorem}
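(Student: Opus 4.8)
The plan is to condition on the random operator $R$ of eq.~(\ref{eq:R}) and reduce the statement to a Levy-type concentration over the unit sphere of its support $S=\operatorname{supp}R$, of complex dimension $K'=\rank R$; this dimension will play the role that $2^q$ plays in Theorem~\ref{thm:main}. Fix a control circuit $C$ and let $P=P_C$ with $0\le P\le\1$ be its accepting operator, so that $C(\Psi)=\bra{\Psi}P\ket{\Psi}$ and $C(2^{-q}\1)=2^{-q}\tr P$; we may assume $\epsilon<1$, else both quantities lie in $[0,1]$ and there is nothing to prove. Since $\ket{\Psi}=\frac{1}{\sqrt{\bra{\Psi_0}R\ket{\Psi_0}}}\sqrt R\ket{\Psi_0}$ with $\ket{\Psi_0}$ Haar-distributed on the sphere of $S$,
\[
  \bra{\Psi}P\ket{\Psi}=\frac{N(\Psi_0)}{D(\Psi_0)},\qquad
  N(\psi)=\bra{\psi}\sqrt R\,P\sqrt R\,\ket{\psi},\quad D(\psi)=\bra{\psi}R\ket{\psi}.
\]

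The first step is to control $N$ and $D$ on the sphere of $S$, conditionally on $R$. Both are of the form $\psi\mapsto\bra{\psi}B\ket{\psi}$ with $0\le B\le\|R\|\,\Pi_S$ (for $N$ one uses $0\le\sqrt R P\sqrt R\le R$), hence $\|R\|$-Lipschitz on that sphere, because $\psi\mapsto\bra{\psi}B\ket{\psi}$ has Lipschitz constant $\lambda_{\max}(B)-\lambda_{\min}(B)$. Their averages over $\ket{\Psi_0}$ are $\EE N=\tfrac1{K'}\tr(PR)$ and $\EE D=\tfrac1{K'}\tr R=K/K'$, using $\EE\proj{\Psi_0}=\tfrac1{K'}\Pi_S$ and $\sqrt R\,\Pi_S\sqrt R=R$. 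The point to notice is that the random and poorly-concentrated factor $1/K'$ cancels in $\EE N/\EE D=\tfrac1K\tr(PR)$, so $\bra{\Psi}P\ket{\Psi}$ must be compared with the intermediate quantity $\tfrac1K\tr(PR)$, and not with $2^{-q}\tr P$ directly. Applying Levy's Lemma (with ambient dimension $d=2K'$) to $N$ and to $D$ at deviation level $\epsilon/4$, and combining the two deviations through the elementary bound $\bigl|\tfrac ND-\tfrac{\EE N}{\EE D}\bigr|\le\tfrac{|D-\EE D|}{\EE D}+\tfrac{|N-\EE N|}{\EE D}\le|D-\EE D|+|N-\EE N|$ (valid since $N\le D$, so $N/D\le1$, and $\EE D=K/K'\ge1$), we obtain, conditionally on $R$,
\[
  {\Pr}_{\Psi_0}\Bigl\{\bigl|\bra{\Psi}P\ket{\Psi}-\tfrac1K\tr(PR)\bigr|>\tfrac\epsilon2\Bigr\}
     \le 8\exp\!\Bigl(-\tfrac{c\epsilon^2}{16}\cdot\tfrac{2K'}{\|R\|^2}\Bigr),\quad c=\tfrac1{9\pi^3}.
\]

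What remains concerns $R$ only, and uses solely the first-moment hypothesis: $\EE\proj{v_j}=2^{-q}\1$ for $\ket{v_j}=\bigotimes_\ell\ket{\psi_j^{(\ell)}}$, by independence across $\ell$ and $\EE\psi_j^{(\ell)}=\tfrac12\1$. The operator Chernoff (Ahlswede--Winter) bound applied to $R=\sum_{j=1}^K\proj{v_j}$ --- variance parameter $\le K2^{-q}\le1$, each summand of norm $1$, ambient dimension $2^q$ --- gives $\|R\|\le 3K^{1/3}$ off a set of probability $\le 2^{q+1}e^{-\frac32K^{1/3}}$. Separately, a one-line Markov estimate --- the probability that a random product vector lies in a fixed $r$-dimensional subspace $W$ is at most $\EE\bra{v}\Pi_W\ket{v}=r2^{-q}$ --- fed into a standard occupancy argument while revealing the $\ket{v_j}$ one at a time shows $\rank R\ge K/2$ off a set of probability $e^{-\Omega(K)}$. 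On the complementary ``good $R$'' event, $\|R\|^2\le 9K^{2/3}$ and $K'\ge K/2$, so the exponent above is at least $\tfrac{c\epsilon^2}{16}\cdot\tfrac{K^{1/3}}{9}=\tfrac{\epsilon^2K^{1/3}}{1296\pi^3}=c'\epsilon^2K^{1/3}$. Finally, $\tfrac1K\tr(PR)=\tfrac1K\sum_j\bra{v_j}P\ket{v_j}$ is an average of $K$ independent $[0,1]$-valued random variables with mean $2^{-q}\tr P$, so Hoeffding gives ${\Pr}_R\{|\tfrac1K\tr(PR)-2^{-q}\tr P|>\epsilon/2\}\le 2e^{-K\epsilon^2/2}\le 2e^{-c'\epsilon^2K^{1/3}}$ (since $K\ge K^{1/3}$). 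Adding the three failure events and taking a union over the at most $\tfrac16(8^8w)^{3v}$ circuits (the ``good $R$'' event being circuit-independent), then collecting constants --- where the hypothesis $K\ge64$, i.e.\ $K^{1/3}\ge4$, is used --- yields eq.~(\ref{eq:rank-K}).

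The step I expect to be the real obstacle is precisely the one finessed above: the naive route would bound the \emph{condition number} of $R|_S$ so as to control the Lipschitz constant of the ratio $N/D$ in one stroke, but that quantity is simply not under control when $K$ is comparable to $2^q$ --- the Gram matrix of the $\ket{v_j}$ is then far from the identity and $R$ is genuinely far from any multiple of $\Pi_S$. The resolution is to split the ratio into its numerator and denominator, at the cost of working with the intermediate target $\tfrac1K\tr(PR)$, and to realise that only an \emph{upper} bound $\|R\|=O(K^{1/3})$ is ever needed; operator Chernoff supplies it with failure probability $\sim 2^qe^{-\Omega(K^{1/3})}$, and this one estimate is responsible both for the $2^q$ prefactor in eq.~(\ref{eq:rank-K}) and, via a sphere of dimension $\sim K$ divided by $\|R\|^2\sim K^{2/3}$, for $K^{1/3}$ rather than $K$ in the exponent. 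A subsidiary subtlety is that $K'$ (equivalently $\tr(P\Pi_S)$) is itself random and not sharply concentrated, which is exactly why the $1/K'$ factors have to be made to cancel between $\EE N$ and $\EE D$ instead of being estimated separately.
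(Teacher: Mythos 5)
Your proposal follows essentially the same architecture as the paper's proof: (i) a high-probability bound $\|R\|_\infty=O(K^{1/3})$ costing the $2^q$ prefactor, (ii) Hoeffding to show $\frac1K\tr RP\approx 2^{-q}\tr P$ for each of the at most $\frac16(8^8w)^{3v}$ accepting operators $P$, (iii) Levy's Lemma conditionally on $R$, over the sphere of $\operatorname{supp}R$ with Lipschitz constant $O(\|R\|_\infty)$, to pass from $\frac1K\tr RP$ to $\bra{\Psi}P\ket{\Psi}$, and a union bound. Within this you deviate in two genuinely different places. First, you bound $\|R\|_\infty$ by applying the operator Chernoff/Bernstein bound directly in dimension $2^q$, exploiting that only a far-tail estimate (threshold $\sim K^{1/3}$ versus mean $K2^{-q}\le1$) is needed; this works --- $K\,D(3K^{-2/3}\|2^{-q})\ge 3K^{1/3}(\ln(3K^{1/3})-1)$ is indeed $\Omega(K^{1/3})$ for $K\ge 64$ --- and it dispenses entirely with the paper's Lemma~\ref{lemma:R}, which instead traces down to $k\approx\frac23\log K$ qubits via the Nielsen--Kempe majorization so as to apply the Chernoff bound near its mean. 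Second, you treat the normalization more carefully than the paper does: the paper takes $\EE_{\Psi_0}\Psi_0=\frac1K\Pi_{\operatorname{supp}R}$, i.e.\ implicitly assumes $\rank R=K$, which is automatic only for continuous single-qubit measures, whereas Definition~\ref{defi:schmidt} allows discrete ones; your cancellation of $1/K'$ between $\EE N$ and $\EE D$, together with the ratio bound using $\EE D=K/K'\ge1$ and $N\le D$, handles this cleanly and is a real (if small) improvement.

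The one soft spot is the ingredient this refinement forces you to add: the lower bound $\rank R\ge K/2$ with failure probability $e^{-\Omega(K)}$. Your sketched argument (Markov bound $\Pr\{v\in W\}\le\dim W\cdot2^{-q}$, reveal the $\ket{v_j}$ sequentially) does not give the constant $1/2$ at the extreme $K=2^q$ permitted by the theorem: while the running span has dimension below $\alpha K$, the conditional probability that the next vector fails to increase the rank is only bounded by $\alpha K/2^q\le\alpha$, so Azuma/Hoeffding yields $\rank R\ge\alpha K$ with exponentially small failure only for $\alpha<1/2$ strictly (e.g.\ $\alpha=1/3$ with failure $\le e^{-2K/9}$); at $\alpha=1/2$ the stochastic comparison is with $\mathrm{Bin}(K,1/2)$ and gives nothing. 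This is harmless for the substance of the theorem --- with $K'\ge K/3$ your Levy exponent becomes $\frac{c\epsilon^2K^{1/3}}{216}$ instead of $\frac{c\epsilon^2K^{1/3}}{144}=c'\epsilon^2K^{1/3}$, so the stated constant $c'$ would need a minor retuning of the $\epsilon$-split or of the norm threshold (the paper's own constant bookkeeping is no tighter: its stated Lipschitz constant $8K^{1/3}$ is in fact consistent only with $c'/4$). Likewise your per-circuit failure $2e^{-\epsilon^2K/2}+8e^{-c'\epsilon^2K^{1/3}}$ multiplied by $M\le\frac16(8^8w)^{3v}$ slightly overshoots the coefficient $(8^8w)^{3v}$; again a constant-level issue, not a structural one. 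In short: correct in structure, with a simpler norm bound and a more careful normalization step than the paper, but state and prove the rank bound with an explicit constant $\alpha<1/2$ (or restrict to continuous measures, where $\rank R=K$ almost surely) and adjust the constants accordingly.
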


In other words, whenever $q + v\ln w = o\bigl(K^{1/3}\bigr)$ -- e.g.~for 
all polynomially bounded circuits and superpolynomial $K$ -- the right 
hand side of eq. (\ref{thm:rank-K}) goes to zero exponentially, and hence for most Schmidt-rank $K$ 
states the measurement results coming from $\Psi$ can
be replaced by classical independent randomness: this changes the
acceptance probability by at most $\epsilon$, regardless of the
circuit used.

To prove this we shall use Levy's Lemma~\cite{MilmanSchechtman} once again,
but we also need two further concentration results:
\begin{lemma}
  \label{lemma:R}
  For the random operator $R$ in eq.~(\ref{eq:R}) such that $K \geq 4.2^k$
  and $2 \leq k \leq q$,
  \[
    {\Pr}_R\left\{ \bigl\| R \bigr\|_\infty > 2\frac{K}{2^k} \right\} \leq 2^q e^{-K 2^{-k}/3}.
  \]
\end{lemma}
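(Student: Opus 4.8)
\medskip
The plan is to view $R$ as a sum of $K$ independent, identically distributed, rank-one positive operators and run the operator Chernoff (Ahlswede--Winter) argument; the prefactor $2^q$ in the claimed bound is exactly the Hilbert-space dimension, which is the signature of that method (a naive $\delta$-net over the unit sphere of $(\CC^2)^{\ox q}$ would instead cost a factor $e^{\Theta(2^q)}$, far too much). Write $R=\sum_{j=1}^{K}X_j$ with $X_j=\proj{\psi_j^{(1)}}\ox\cdots\ox\proj{\psi_j^{(q)}}$, a rank-one orthogonal projector on $(\CC^2)^{\ox q}$, so that $0\le X_j\le\1$. Since the $q$ single-qubit factors are independent with $\EE\proj{\psi_j^{(\ell)}}=\frac12\1$, we have $\EE X_j=2^{-q}\1$. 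Put $s:=2K/2^k$; the goal is to bound $\Pr_R\{\|R\|_\infty>s\}$.

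First I would pass to exponential moments: for every $t>0$,
\[
  {\Pr}_R\{\|R\|_\infty>s\}\le e^{-ts}\,\EE\,\tr\,e^{tR}.
\]
What makes the right-hand side tractable is that $X_j$ is a projector, hence $e^{tX_j}=\1+(e^t-1)X_j$, so $\EE\,e^{tX_j}=\bigl(1+(e^t-1)2^{-q}\bigr)\1$ is a scalar multiple of the identity. Iterating the Golden--Thompson inequality $\tr e^{A+B}\le\tr(e^A e^B)$ and taking the expectation over one summand at a time (the Ahlswede--Winter trick) then gives
\[
  \EE\,\tr\,e^{tR}\le 2^q\bigl(1+(e^t-1)2^{-q}\bigr)^K\le 2^q\exp\!\bigl(K2^{-q}(e^t-1)\bigr),
\]
and therefore ${\Pr}_R\{\|R\|_\infty>s\}\le 2^q\exp\!\bigl(-ts+K2^{-q}(e^t-1)\bigr)$ for all $t>0$.

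It remains to optimize over $t$ and to check the constants. Set $a:=s\,2^q/K=2^{\,q-k+1}$, which is $\ge 2$ because $k\le q$. The choice $e^t=a$ (legitimate since $a>1$) turns the exponent into $-K2^{-q}\bigl(a\ln a-a+1\bigr)$, and I would finish using the elementary inequality $a\ln a-a+1\ge a/6$ valid for all $a\ge 2$: the function $a\mapsto\ln a-1+1/a$ is increasing on $[1,\infty)$ and at $a=2$ equals $\ln2-\tfrac12>\tfrac16$. Since $K2^{-q}\cdot a=2K2^{-k}$, the exponent is at most $-\tfrac13 K2^{-k}$, which is the assertion. The remaining hypotheses ($k\ge 2$ and $K\ge 4\cdot 2^k$) are not needed for the estimate itself; they merely place us in the regime where the lemma is later applied and keep the bound non-trivial, since they force $\tfrac13 K2^{-k}\ge\tfrac43>1$.

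The only genuine obstacle is the operator-Chernoff step -- justifying $\EE\,\tr\,e^{tR}\le 2^q\bigl(1+(e^t-1)2^{-q}\bigr)^K$ by iterated Golden--Thompson, or equivalently citing Ahlswede--Winter -- after which the rank-one structure of the $X_j$ reduces everything to a one-line scalar Chernoff optimization. The numerically tight point is $k=q$, where $a=2$ and $a\ln a-a+1\ge a/6$ is nearly an equality; that is what pins down the constant $\tfrac13$ in the exponent.
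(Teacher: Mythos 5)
Your proof is correct, and it takes a genuinely different route from the paper's. The paper first invokes the Nielsen--Kempe result that a separable operator is majorized by its reductions to conclude $\|R\|_\infty \le \|R^{(k)}\|_\infty$ for the partial trace $R^{(k)}$ onto the first $k$ qubits, and then cites the Ahlswede--Winter operator Chernoff theorem as a black box in dimension $2^k$, where the event is a clean factor-$2$ deviation from the mean $K2^{-k}\1$ and the relative-entropy exponent $D(2\cdot 2^{-k}\|2^{-k}) \ge 2^{-k}/3$ gives the constant. You instead stay in the full $2^q$-dimensional space, reprove the operator Chernoff bound from scratch via Markov plus iterated Golden--Thompson (made painless by the projector identity $e^{tX_j}=\1+(e^t-1)X_j$ and the fact that $\EE e^{tX_j}$ is a multiple of the identity), and absorb the parameter $k$ into the choice of threshold, optimizing at the large multiplicative deviation $a=2^{q-k+1}$; the elementary bound $a\ln a - a + 1 \ge a/6$ for $a\ge 2$ then reproduces exactly the exponent $K2^{-k}/3$ and the dimension prefactor $2^q$ of the statement. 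Your version is self-contained (no majorization lemma, no external citation needed beyond Golden--Thompson), and for $k<q$ its Poisson-type exponent $-K2^{-q}(a\ln a - a + 1)$ is in fact strictly stronger than the stated bound; the paper's reduction buys a smaller prefactor $2^k$ and a one-line appeal to the quoted theorem. Your observation that $k\ge 2$ and $K\ge 4\cdot 2^k$ are not needed for the estimate itself also applies to the paper's argument, which only uses $k\le q$.
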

\begin{proof}
  We start by observing, for the reduction of $R$ onto the first $k$ qubits,
  \[
    R^{(k)} \assign \tr_{k+1\ldots q} R = \sum_{j=1}^K \proj{\psi_j^{(1)}} \ox \cdots \ox \proj{\psi_j^{(k)}},
  \]
  that $\| R \|_\infty \leq \bigl\| R^{(k)} \bigr\|_\infty$. This follows from
  the result in~\cite{NielsenKempe} that a separable operator $R$
  majorizes all its reductions, applied to the largest eigenvalue.
  Hence, we only need to bound the probability that
  $\bigl\| R^{(k)} \bigr\|_\infty$ is ``large''.
  
  Now, $R^{(k)} = \sum_{j=1}^K X_j$ is the sum of i.i.d.~random operators
  $X_j = \proj{\psi_j^{(1)}} \ox \cdots \ox \proj{\psi_j^{(k)}} \in [0;\1]$,
  so the theory of large deviations of operator valued random variables
  from~\cite{AhlswedeWinter:cq-ID} applies. Since $\EE X_j = 2^{-k}\1$,
  we can use~\cite[Thm.~19]{AhlswedeWinter:cq-ID} directly, and get
  \[\begin{split}
    {\Pr}_R\left\{ \bigl\| R^{(k)} \bigr\|_\infty > 2\frac{K}{2^k} \right\} 
                                             &\leq 2^k e^{-K D(2.2{-k}\|2^{-k})}    \\
                                             &\leq 2^q e^{-K 2^{-k}/3},
  \end{split}\]
  observing the elementary inequality
  \(
    D\bigl(2.2^{-k}\|2^{-k}\bigr) \geq (2\ln 2 - 1)2^{-k} \geq 2^{-k}/3
  \)
  for the relative entropy, as well as $k\leq q$.
\end{proof}

\medskip
\begin{remark}
  The bound on $\| R \|_\infty$ in Lemma~\ref{lemma:R} is in general an
  overestimate. Indeed, if $K \leq \epsilon 2^{q/2}$, it is straightforward
  to see that $\EE \tr R^2 \leq K + K^2 2^{-q} \leq K + \epsilon^2$.
  Elementary arguments, using $\tr R^2 \geq \tr R = K$ show that then,
  with high probability, $\| R \|_\infty \leq 1+O(\epsilon)$.
\end{remark}

\begin{lemma}
  \label{lemma:hoeffding}
  For the random operator $R$ in eq.~(\ref{eq:R}), and $0\leq P \leq \1$,
  \[
    {\Pr}_R \left\{ \left| \frac{1}{K}\tr RP - \frac{1}{2^q} \tr P \right| > \epsilon \right\}
                                                                     \leq 2 e^{-2 \epsilon^2 K}.
  \]
\end{lemma}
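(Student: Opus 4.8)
The plan is to recognize $\tr RP$ as a sum of $K$ independent, identically distributed, bounded random variables and then invoke Hoeffding's inequality. First I would expand $R$ according to eq.~(\ref{eq:R}) and write
\[
  \tr RP = \sum_{j=1}^K Y_j, \qquad
  Y_j := \tr\!\left[ \bigl( \proj{\psi_j^{(1)}} \ox \cdots \ox \proj{\psi_j^{(q)}} \bigr) P \right]
       = \bra{\psi_j^{(1)} \cdots \psi_j^{(q)}} P \ket{\psi_j^{(1)} \cdots \psi_j^{(q)}} .
\]
Since the $qK$ vectors $\ket{\psi_j^{(\ell)}}$ are all mutually independent, the $Y_j$ are i.i.d.; and since $0\le P\le\1$ while $\ket{\psi_j^{(1)}\cdots\psi_j^{(q)}}$ is a unit vector, each $Y_j$ takes values in $[0;1]$.

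Next I would compute the common mean. For fixed $j$ the factors $\proj{\psi_j^{(\ell)}}$ are independent across $\ell$, so the expectation factorizes over the tensor structure,
\[
  \EE\bigl( \proj{\psi_j^{(1)}} \ox \cdots \ox \proj{\psi_j^{(q)}} \bigr)
     = \bigl( \EE \proj{\psi_j^{(1)}} \bigr) \ox \cdots \ox \bigl( \EE \proj{\psi_j^{(q)}} \bigr)
     = \bigl( \tfrac12\1 \bigr)^{\ox q} = 2^{-q}\1,
\]
using the hypothesis $\EE \psi_j^{(\ell)} = \frac12\1$ from Definition~\ref{defi:schmidt}. Hence $\EE Y_j = 2^{-q}\tr P$, and therefore $\EE\bigl( \frac1K \tr RP \bigr) = \frac1{2^q}\tr P$, which is exactly the quantity we want $\frac1K \tr RP$ to concentrate around.

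Finally I would apply Hoeffding's inequality to the empirical average $\frac1K\sum_{j=1}^K Y_j$ of $K$ i.i.d.\ random variables supported in $[0;1]$, which yields
\[
  {\Pr}_R\!\left\{ \Bigl| \tfrac1K \tr RP - \tfrac1{2^q}\tr P \Bigr| > \epsilon \right\}
      = {\Pr}_R\!\left\{ \Bigl| \tfrac1K\sum_{j=1}^K Y_j - \EE Y_j \Bigr| > \epsilon \right\}
      \le 2e^{-2\epsilon^2 K},
\]
the desired bound. I do not expect a genuine obstacle here: the only two points needing a moment's care are the boundedness $Y_j\in[0;1]$ (immediate from $0\le P\le\1$) and the factorization of $\EE X_j$ over the $q$ tensor factors (immediate from the independence of the $\ket{\psi_j^{(\ell)}}$ in $\ell$), and both follow directly from the construction in Definition~\ref{defi:schmidt}.
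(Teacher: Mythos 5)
Your proof is correct and follows essentially the same route as the paper: both identify $\tr RP$ as a sum of $K$ i.i.d.\ random variables in $[0;1]$ with mean $2^{-q}\tr P$ and apply Hoeffding's inequality. Your explicit factorization of the expectation over the tensor factors is a detail the paper leaves implicit, but it is the same argument.
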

\begin{proof}
  Observe that
  \[
    \tr RP = \sum_{j=1}^K \tr \bigl(\proj{\psi_j^{(1)}} \ox \cdots \ox \proj{\psi_j^{(q)}} P\bigr)
  \]
  is a sum of $K$ i.i.d.~real random variables $X_j \in [0;1]$, and in the
  lemma we are looking at a large deviation of their empirical mean from
  the expectation, $\EE X_j = 2^{-q}\tr P$.
  Hence the classical Hoeffding bound~\cite{Hoeffding} applies:
  \[
    \Pr\left\{ \left| \frac{1}{K}\sum_j X_j - \EE X_1 \right| > \epsilon \right\}
                                                                \leq 2 e^{-2\epsilon^2 K},
  \]
  and we are done.
\end{proof}

\medskip\noindent
We are now ready for the 
\begin{proof}[of Theorem~\ref{thm:rank-K}]
  Picking the random state $\ket{\Psi}$, we have implicitly already constructed
  the operator $R$ in Definition~\ref{defi:schmidt}.

  First, according to Lemma~\ref{lemma:R}, and choosing
  $k = \left\lfloor \frac{2}{3}\log K \right\rfloor$, we get
  \begin{equation}
    \label{eq:R-bound}
    \| R \|_\infty \leq 4 K^{1/3},
  \end{equation}
  except with probability $\leq 2^q e^{-K^{1/3}/3}$.

  Second, according to Lemma~\ref{lemma:hoeffding}, we have
  for all measurement POVMs $(P,\1-P)$ constructed by the allowed
  classical control circuits [of which there are
  $M \leq \frac{1}{6}\bigl( 8^8 w\bigr)^{3v}$ -- see the proof of
  Theorem~\ref{thm:main}],
  \begin{equation}
    \label{eq:projection}
    \left| \frac{1}{K}\tr RP - 2^{-q}\tr P \right| \leq \epsilon,
  \end{equation}
  except with probability $\leq 2 M e^{-2\epsilon^2 K}$.
  
  Third, assuming eq.~(\ref{eq:R-bound}) holds for a particular $R$,
  application of Levy's Lemma~\cite{MilmanSchechtman} to the same POVM
  elements $P$ is possible, noting that the Lipschitz constant of the
  function $\ket{\Psi_0} \mapsto \tr \sqrt{R}\Psi_0\sqrt{R} P$ 
  is $\Lambda \leq 8 K^{1/3}$.
  We find that for all these $P$,
  \begin{equation}
    \label{eq:Psi0-P}
    \left| \tr \sqrt{R} \Psi_0 \sqrt{R} P - \frac{1}{K}\tr RP  \right| \leq \epsilon,
  \end{equation}
  except with probability $\leq 4 M e^{-c \epsilon^2 K/\Lambda^2}$,
  with $c=\frac{1}{9\pi^3}$, as in Theorem~\ref{thm:main}.
  The special case $P=\1$ is trivially included:
  \begin{equation}
    \label{eq:Psi0-I}
    \left| \tr \sqrt{R} \Psi_0 \sqrt{R} - 1 \right| \leq \epsilon,
  \end{equation}
  i.e.~$\sqrt{R}\ket{\Psi_0}$ is already almost normalized.
  
  Putting the three steps together, we find that if 
  eqs.~(\ref{eq:R-bound}), (\ref{eq:projection}), (\ref{eq:Psi0-P})
  and (\ref{eq:Psi0-I}) hold,
  then for all eligible control circuits,
  \[
    \bigl| C(\Psi)-C(2^{-q}\1) \bigr| \leq  3\epsilon.
  \]
  As noted above, however, this will be the case
  except with probability bounded above by
  \[
    2^q e^{-K^{1/3}/3} + 2 M e^{-2\epsilon^2 K} + 4 M e^{-c \epsilon^2 K/\Lambda^2}.
  \]
  Redefining $\epsilon \mapsto \epsilon/3$ concludes the proof.
\end{proof}

It is intuitive (and not difficult to show) 
that with high probability, a random Schmidt-rank
$K$ state satisfies $E_g(\ket{\Psi}) \leq \log K + O(1)$, 
and since $\ket{\Psi}$ is always a superposition of $K$
product states, also the descriptive
(quantum Kolmogorov) complexity of the state is bounded by
an exponential in $K$ (this follows from a straightforward counting
argument, cf.~\cite{Mora:QKC}). Hence even these states, though
failing the criterion of~\cite{Obergurgl-talk}, are useless for AMBQC.
We would like to say that this is due to the complexity of a random
choice of pure state, but have to stress that it is not the
descriptive complexity of~\cite{Mora:QKC}. Rather, it is the fact
that all degrees of freedom given to the state are exhausted uniformly.

Note that the number of degrees of freedom sufficient for this is
anything growing superpolynomially in $n$, if the control circuit and
$q$ are polynomially bounded. But it is also necessary, because
if $K$ is polynomial, then $\ket{\Psi}$ always has an efficient
classical description, and so have all the states occurring through
the course of the computation; in other words, the state is useless
for AMBQC for another reason, as it is simulable in $\mathrm{P}$.

\medskip
\noindent
\emph{Conclusion.---} We have shown that for decision problems with bounded error probability (and more
generally for the task of approximately sampling a distribution on ``few'' bits),
a generic quantum state is (with overwhelming probability) not more useful as a resource
to a classical control mechanism for a generalized measurement-based model,
than a random bit string. The only condition on the classical control is that
it can be built as a Boolean circuit of subexponential depth.
In other words, unless $\mathrm{BQP}=\mathrm{BPP}$, such states won't yield
universal quantum computation when used in any reasonable environment controlling
the sequence of measurements.
However, the result is not limited to $\mathrm{BQP}$, it also encompasses promise problems,
as long as the AMBQC is supposed to be polynomially efficient and has bounded
error; furthermore the complexity may essentially be anything strictly
smaller than exponential. (Observe that an exponential classical control could
simulate the whole state, so its power is also not increased by having
access to $\ket{\Psi}$.)

Finally, even decidedly ``non-random'' states (in the sense that their
distribution is not unitary invariant) still have the same property
if only they are drawn from a large enough manifold, as we have demonstrated
with random states of bounded Schmidt rank.

\medskip
\noindent
\emph{Acknowledgments.---}
We acknowledge insightful and interesting discussions with 
David Gross, Steve Flammia and Jens Eisert;
we also thank Ashley Montanaro for important suggestions.
MB and AW are supported by the EC-FP6-STREP network QICS.
CM acknowledges support by the IQC, QuantumWorks NSERC Innovation Platform, 
and Ontario Centres of Excellence.
AW is supported by the U.K. EPSRC through the ``QIP IRC'' and an
Advanced Fellowship, by a Royal Society Wolfson Merit Award
and by the European Commission through IP ``QAP''. The Centre for
Quantum Technologies is funded by the Singapore Ministry of Education
and the National Research Foundation as part of the Research Centres
of Excellence programme.


\begin{thebibliography}{99}

\bibitem{RaussendorfBriegel} R. Raussendorf, H.~J. Briegel, 
  ``A One-Way Quantum Computer'',
  Phys. Rev. Lett. {\bf 86}:5188-5191 (2001).
  R. Raussendorf, D. E. Browne, H. J. Briegel, 
  ``Measurement-based quantum computation on cluster states'', 
  Phys. Rev. A {\bf 68}:022312 (2003).
  
\bibitem{Gross06} D. Gross, J. Eisert, 
  ``Novel Schemes for Measurement-Based Quantum Computation'', 
  Phys. Rev. Lett. {\bf 98}:220503 (2007).
  D. Gross, J. Eisert, N. Schuch, D.~Perez-Garcia, 
  ``Measurement-based quantum computation beyond the one-way model'', 
  Phys. Rev. A {\bf 76}:052315 (2007).
  
\bibitem{Gross08} D. Gross, J. Eisert, 
  ``Quantum computational webs", 
  arXiv:0810.2542 (2008). 
 
\bibitem{Barrett08} S.~D.~Barrett, S.~D.~Bartlett, A.~C.~Doherty, D.~Jennings, T.~Rudolph,
  ``Transitions in the computational power of thermal states for measurement-based quantum computation'', 
  arXiv:0807.4797 (2008).

\bibitem{Brennen08} G.~K.~Brennen, A.~Miyake, 
  ``Measurement-based quantum computer in the gapped ground state of a two-body Hamiltonian'', 
  Phys. Rev. Lett. {\bf 101}:010502 (2008).

\bibitem{Doherty08} A.~C.~Doherty, S.~D.~Bartlett, 
  ``Identifying phases of matter that are universal for quantum computation'', 
  arXiv:0802.4314 (2008).
  
\bibitem{maarten07} M. Van den Nest, A. Miyake, W. D\"ur, H.~J. Briegel,
  ``Universal resources for measurement-based quantum computation'',
  Phys. Rev. Lett. {\bf 97}:150504 (2006).

\bibitem{vandenNestDuerMiyakeBriegel} M. Van den Nest, W. D\"ur, A. Miyake, H.~J. Briegel,
  ``Fundamentals of universality in one-way quantum computation'',
  New J. Phys. {\bf 9}:204 (2007).

\bibitem{Anders08} J.~Anders, D.~E.~Browne, 
  ``Computation from correlation'',
  arXiv:0805.1002v2 (2008).
  
\bibitem{Obergurgl-talk} D. Gross, S. Flammia, J. Eisert, 
  ``Most quantum states are too entangled to be useful as computational resources", 
  arXiv:0810.4331 (2008).

\bibitem{Mora:QKC} C. E. Mora, H. J. Briegel, 
  ``Algorithmic Complexity and Entanglement of Quantum States'', 
  Phys. Rev. Lett. {\bf 95}:200503 (2005).
  C. E. Mora, H. J. Briegel, B. Kraus, 
  ``Quantum Kolmogorov complexity and its applications'',
  arXiv:quant-ph/0610109 (2006).

\bibitem{MilmanSchechtman} V. D. Milman, G. Schechtman. 
  \emph{Asymptotic Theory of Finite Dimensional Normed Spaces} 
  (With an Appendix by M. Gromov), 
  LNM 1200, Springer Verlag, Berlin New York, 1986.

\bibitem{NielsenKempe} M. A. Nielsen, J. Kempe, 
  ``Separable States Are More Disordered Globally than Locally'', 
  Phys. Rev. Lett. {\bf 86}(22):5184-5187 (2001).

\bibitem{AhlswedeWinter:cq-ID} R. Ahlswede, A. Winter, 
  ``Strong Converse for Identification Via Quantum Channels'', 
  IEEE Trans. Inf. Theory {\bf 48}(3):569-579 (2002).

\bibitem{Hoeffding} W. Hoeffding, 
  ``Probability inequalities for sums of bounded random variables'', 
  J. Amer. Stat. Ass. {\bf 58}(301):13-30 (1963).

\end{thebibliography}
\end{document}